\newtheorem{theorem}{Theorem}
\newtheorem{lemma}[theorem]{Lemma}
\newtheorem{proposition}[theorem]{Proposition}
\newtheorem{definition}[theorem]{Definition}
\title{Transitivity of Subtyping for Intersection Types}
\author{Jeremy G. Siek}
\begin{document}
\maketitle

\newcommand{\TOP}{\ensuremath{\mathtt{U}}}
\newcommand{\dom}[1]{\cap\mathsf{dom}(#1)}
\newcommand{\cod}[1]{\cap\mathsf{cod}(#1)}
\newcommand{\topP}[1]{\mathsf{top}(#1)}
\newcommand{\topInCod}[1]{\mathsf{topInCod}(#1)}
\newcommand{\depth}[1]{\mathsf{depth}(#1)}
\newcommand{\size}[1]{\mathsf{size}(#1)}
\newcommand{\inside}[0]{\inplus}
\newcommand{\containedin}[0]{\subsetpluseq}

\begin{abstract}
The subtyping rules for intersection types traditionally employ a
transitivity rule (Barendregt et al. 1983), which means that
subtyping does not satisfy the subformula property, making it more
difficult to use in filter models for compiler verification.
Laurent develops a sequent-style subtyping system, without
transitivity, and proves transitivity via a sequence of six lemmas
that culminate in cut-elimination (2018). This article develops a
subtyping system in regular style that omits transitivity and
provides a direct proof of transitivity, significantly reducing the
length of the proof, exchanging the six lemmas for just
one. Inspired by Laurent's system, the rule for function types is
essentially the $\beta$-soundness property.  The new system
satisfies the ``subformula conjunction property'': every type
occurring in the derivation of $A <: B$ is a subformula of $A$ or
$B$, or an intersection of such subformulas. The article proves that
the new subtyping system is equivalent to that of Barendregt, Coppo,
and Dezani-Ciancaglini.
\end{abstract}

\section{Introduction}

Intersection types were invented by Coppo, Dezani-Ciancaglini, and
Salle to study normalization in the lambda
calculus~\citep{Coppo:1979aa}. Subsequently intersection types have
been used for at least three purposes:
\begin{enumerate}
  \item in type systems~\citep{Reynolds:1988aa,Pierce:1991aa,Castagna:2014aa,Chaudhuri:2014aa,Oliveira:2016aa,Amin:2017aa,Muehlboeck:2018aa,Bi:2019aa,Dunfield:2019aa,Microsoft:TypeScript2020aa},
  \item in precise static
    analyses~\citep{Turbak:1997aa,Palsberg:1998aa,Mossin:2003aa,Simoes:2007aa},
    and
  \item in the denotational semantics for a variety of lambda
    calculi~\citep{Coppo:1980ab,Coppo:1981aa,Coppo:1984aa,Honsell:1992aa,Abramsky:1993fk,Honsell:1999aa,Ishihara:2002aa,Rocca:2004aa,Dezani-Ciancaglini:2005aa,Alessi:2006aa,Barendregt:2013aa}.
\end{enumerate}
The motivation for this article comes from the third use of
intersection types, in denotational semantics. We are interested in
constructing filter models for the denotational semantics of
functional programming languages in support of mechanized compiler
verification. Section~\ref{sec:motivation} describes how that research
motivates the results in this article.

There are many intersection type systems; perhaps the best-known them
is the BCD intersection type system of \citet{Barendregt:1983aa}. For
this article we focus on the BCD system, following the presentation of
\citet{Barendregt:2013aa}. We conjecture that our results apply to
other intersection type systems that include the $({\to}{\cap})$ rule
for distributing intersection and function types.

The BCD intersection type system includes function types, $A \to B$,
intersection types, $A \cap B$, a top type, $\TOP$, and an infinite
collection of type constants. Figure~\ref{fig:types} defines the
grammar of types.

\begin{figure}[tbp]
  \[
  \begin{array}{lclr}
    \alpha,\beta & ::= & \TOP \mid c_0 \mid c_1 \mid c_2 \mid \cdots & \text{atoms}\\
    A,B,C,D & ::= & \alpha \mid A \to B \mid A \cap B & \text{types}
  \end{array}
  \]
  \caption{Intersection Types}
  \label{fig:types}
\end{figure}

The BCD intersection type system includes a subsumption rule which
states that a term $M$ in environment $\Gamma$ can be given type $B$
if it has type $A$ and $A$ is a subtype of $B$, written $A \leq B$.
\[
\inference{\Gamma \vdash M : A & A \leq B}
          {\Gamma \vdash M : B}
\]
Figure~\ref{fig:BCD-subtyping} reviews the BCD subtyping system.  Note
that in the (trans) rule, the type $B$ appears in the premises but not
in the conclusion. Thus, the BCD subtyping system does not satisfy the
subformula property.  For many other subtyping systems, it is
straightforward to remove the (trans) rule, modify the other rules,
and then prove transitivity.  Unfortunately, the $({\to}{\cap})$ rule
of the BCD system significantly complicates the situation.  One might
hope to omit the $({\to}{\cap})$ rule, but it plays an import role in
ensuring that the filter model produces a compositional semantics.

\begin{figure}[tbp]
  \fbox{$A \leq B$}
  \begin{gather*}
    \text{(refl)} \; \inference{}{A \leq A} \quad
    \text{(trans)} \; \inference{A \leq B & B \leq C}{A \leq C} \\[3ex]
    \text{(incl$_L$)} \; \inference{}{A \cap B \leq A} \quad
    \text{(incl$_R$)} \; \inference{}{A \cap B \leq B} \quad
    \text{(glb)} \; \inference{A \leq C & A \leq D}{A \leq C \cap D} \\[3ex]
    (\to) \; \inference{C \leq A & B \leq D}{A \to B \leq C \to D} \quad
    ({\to}{\cap}) \; \inference{}{(A \to B) \cap (A \to C) \leq A \to (B \cap C)} \\[3ex]
    (\TOP_{\mathrm{top}}) \; \inference{}{A \leq \TOP} \quad
    (\TOP{\to}) \; \inference{}{\TOP \leq C \to \TOP}
  \end{gather*}
  \caption{Subtyping of Barendregt, Coppo, and
    Dezani-Ciancaglini (BCD).}
  \label{fig:BCD-subtyping}
\end{figure}

\section{Motivations from Verified Compilation}
\label{sec:motivation}

Our interest in intersection type systems stems from our work on
mechanized correctness proofs for compilers for functional programming
languages. We are exploring whether denotational semantics can
streamline such proofs compared to the operational techniques such as
those used to prove the correctness of
CakeML~\citep{Kumar:2014aa,Owens:2017aa}. In particular, we are
constructing filter models because they are relatively straightforward
to mechanize in a proof assistant compared to traditional domain
theory~\citep{Benton:2009ab,Dockins:2014aa}.  In a filter model, the
meaning of a program $M$ is the set of all types that can be assigned
to it by an intersection type system:
\[
   \llbracket M \rrbracket = \{ A \mid \emptyset \vdash M : A \}
\]
Most type systems are much too coarse of an abstraction to be used as
filter models. Intersection type systems are special in that they can
completely characterize the runtime behavior of a program.  To catch a
glimpse of how filter models work, consider the meaning of the
identity function in the $\lambda$-calculus.
\[
  \llbracket \lambda x.\, x \rrbracket =
      \{ c_0 \to c_0, c_1 \to c_1, c_2 \to c_2, \ldots \}
\]
If one takes each type constant $c_i$ to represent a singleton type
for the number $i$, then the set of types assigned to the identity
function starts to look like the graph of the function. Indeed, filter
models are closely related to graph
models~\citep{Scott:1976lq,Engeler:1981aa,Plotkin:1993ab,Barendregt:2013aa}.

One of the standard compiler transformations that we are interested in
verifying is common subexpression
elimination~\citep{Downey:JACM:1980,Appel:1992fk,Tarditi:1996aa}. For
example, this transformation would replace the duplicated $(f \; x)$
terms in the following
\[
  M = \lambda f.\, \lambda h.\, \lambda x.\, h \; (f \; x) \; (f \; x)
\]
with a single instance of $(f\;x)$ that is let-bound to a variable $y$:
\[
M' = \lambda f.\, \lambda h.\, \lambda x.\,
\mathsf{let}\; y = (f \; x) \;\mathsf{in}\;
h \; y \; y
\]
Unfortunately, $M$ and $M'$ are not denotationally equal according to
most filter models because filter models allow function graphs that
represent arbitrary relations. The above transformation depends on $(f
\; x)$ producing the same result every time, but a filter model
associates arbitrary graphs with parameter $f$.

To solve this problem, one needs to restrict the graphs to only allow
approximations of functions. This can be accomplished by requiring
every entry in a function's graph to be ``consistent''. Two entries
are consistent, written $(A \to B) \sim (A' \to B')$, when either $A
\not\sim A'$ or $A \sim A'$ and $B \sim B'$. Two type constants are
consistent when they are equal. One would then hope to prove that for
any term $M$, if $\Gamma \vdash M : C$ and $\Gamma \vdash M : C'$,
then $C \sim C'$ as long as $C$, $C'$, and $\Gamma$ are self
consistent. A key lemma needed for that proof is that consistency is
upward closed with respect to subtyping: if $A \sim B$, $A <: C$, $B
<: D$, then $C \sim D$.

This brings us back to the transitivity rule and the subformula
property. The proof of the above lemma hits a snag in the cases where
$A <: C$ and $B <: D$ are derived using the transitivity rule. The
intermediate type is not guaranteed to be self consistent. We explored
adding this guarantee to the transitivity rule, but it caused a
significant increase in obligations elsewhere in our proofs. By some
serendipity, around that time \citet{Laurent:2018aa} published a
subtyping system that removes the transitivity rule and satisfies the
subformula property.  Laurent developed a sequent-style subtyping
system, written $\Gamma \vdash B$, where $\Gamma$ is a sequence of
types $A_1,\ldots,A_n$. The intuition is that $A_1,\ldots,A_n \vdash
B$ corresponds to $A_1 \cap \cdots \cap A_n \leq B$. This system
satisfies the subformula property and is equivalent to the BCD
system. To prove this, Laurent establishes six lemmas that culminate
in cut-elimination, from which transitivity follows. Laurent
mechanized these proofs in Coq.

We immediately used Laurent's result to complete a filter model for
ISWIM~\citep{Landin:1966la,G.-D.-Plotkin:1975on,Felleisen:2009aa} and
mechanized the results in the Isabelle proof
assistant~\citep{Siek:2018aa}.

A year later we began to port this filter model to the Agda proof
assistant. We needed several variations on the filter model to give
semantics to different languages, including ones that are
call-by-value and call-by-name, and to the different intermediate
languages used in our compiler.  The availability of dependent types
in Agda promised to make our development more reusable.  However, we
still found it tedious to repeat Laurent's proof of transitivity for
each of the filter models, and in the process we developed an
intuition that the sequent-style system and the six lemmas were not
necessary.

The present article discusses how we were able to import the key
feature from Laurent's system, the rule for subtyping between function
types, into the BCD subtyping system, replacing the function $(\to)$,
distributivity $({\to}{\cap})$, and transitivity (trans) rules.  We
then prove transitivity without the sequence of six lemmas, instead
using one lemma that has been used previously to prove the inversion
principle for subtyping of function
types~\citep{Barendregt:2013aa}. The new subtyping system does not
technically satisfy the subformula property but it satisfies what we
call the \emph{subformula conjunction property}: every type occurring
in the derivation of $A <: B$ is a subformula of $A$ or $B$ or an
intersection of such subformulas.  This property is enough to ensure
that if $A$ and $B$ are self consistent, then so are all the types
that appear in the derivation of $A <: B$.  Thus, the new subtyping
system is suitable for mechanizing filter models.

\section{Road Map}
\label{sec:road-map}

The rest of this article is organized as follows. We discuss two
recent developments regarding subtyping for intersection types in
Section~\ref{sec:recent-developments}.  We present our new subtyping
system in Section~\ref{sec:new-subtyping}, prove transitivity in
Section~\ref{sec:trans}, and prove its equivalence to BCD subtyping in
Section~\ref{sec:equiv}. We conclude the article in
Section~\ref{sec:conclude}.

The definitions and results in this article have been machine checked
in Agda, in the file \texttt{agda/TransSubInter.agda} in the following
repository.
\begin{center}
  \url{https://github.com/jsiek/denotational_semantics}
\end{center}

\section{Recent Developments}
\label{sec:recent-developments}

There has been a recent flurry of interest in subtyping for
intersection types. We begin by discussing two algorithms that
effectively satisfy the subformula property. We then discuss issues
surrounding subtyping and transitivity in the Scala language.

\citet{Bi:2018aa} present a subtyping algorithm that satisfies the
subformula property and proves that it is equivalent to BCD
subtyping. Their system is based on the decision procedure
of~\citet{Pierce:1989aa} and takes the form $\mathcal{L} \vdash A
\prec: B$, where $\mathcal{L}$ is a queue of types that are peeled off
from the domain of the type on the right. So the judgment
$\mathcal{L} \vdash A \prec: B$ is equivalent to $A \leq \mathcal{L}
\to B$ where $\mathcal{L} \to B$ is defined as
\begin{align*}
  [] \to B &= B\\
  (\mathcal{L},A) \to B &= \mathcal{L} \to (A \to B)
\end{align*}
The proof of transitivity for this system is an adaptation of Pierce's
but it corrects some errors and introduces more lemmas concerning an
auxiliary notion called reflexive supertypes.

\citet{Muehlboeck:2018aa} develop a framework for obtaining subtyping
algorithms for systems that include intersection and union types and
that are extensible to other types. To decide $A \leq B$ their
algorithm converts $A$ to disjunctive normal form and then applies a
client-supplied function to each collection of literals. To obtain a
system equivalent to BCD, including the ${\to}{\cap}$ rule, the client
side function saturates the collection of literals by applying
${\to}{\cap}$ (left to right) as much as possible.

These two algorithms were developed for the purposes of type checking
programming languages, and not for the construction of filter models,
so the goals are somewhat different. In short, the algorithms
introduce complications for the sake of efficiency, which would be
undesirable for use in a filter model.

Research on the Scala programming language~\citep{Odersky:2004aa} has
led to the development of the Dependent Object Types (DOT) calculus,
which includes path-dependent types and intersection
types~\citep{Amin:2016aa,Amin:2017aa}.  The interplay between
path-dependent types, subtyping, and transitivity in DOT has proved a
challenge in the proof of type soundness. Nevertheless,
\citet{Rompf:2016aa} prove the inversion principle for function types
using a ``pushback'' lemma that reorganizes any subtyping derivation
such that the last rule to be applied is never transitivity. Applying
such a lemma exhaustively might provide an alternative route to
transitivity for the new subtyping system presented in this
article. Conversely, the new subtyping system might enable the
addition of distributivity of function and intersection types to
DOT. However, we suspect that the pushback approach would encounter
difficulties if applied to the original BCD subtyping rules.

\section{A New Subtyping System}
\label{sec:new-subtyping}

Our new subtyping system relies on a few definitions that are given in
Figure~\ref{fig:aux}. These include the partial functions $\dom{A}$
and $\cod{A}$, the $\topP{A}$ and $\topInCod{A}$ predicates, and the
relations $A \inside B$ and $A \containedin B$.
The $\dom{A}$ and $\cod{A}$ partial functions return the domain or
codomain if $A$ is a function type, respectively. If $A$ is an
intersection $A_1 \cap A_2$, then $\dom{A}$ is the intersection of the
domain of $A_1$ and $A_2$.  If $A$ is an atom, $\dom{A}$ is
undefined. Likewise for $\cod{A}$. For example, if $A = (A_1 \to B_1)
\cap \cdots \cap (A_n \to B_n)$, then $\dom{A} = A_1 \cap \cdots \cap
A_n$ and $\cod{A} = B_1 \cap \cdots \cap B_n$.  When $\dom{A}$ or
$\cod{A}$ appears in lemma or theorem statement, we implicitly assume
that $A$ is a type such that $\dom{A}$ and $\cod{A}$ are defined.
The $\topP{A}$ predicate identifies types that are equivalent to
$\TOP$. (See Proposition~\ref{prop:top}(\ref{prop:AllBot-⊑-any}).) The
$\topInCod{A}$ predicate identifies types that have $\TOP$ in their
codomain.
The relations $A \inside B$ and $A \containedin B$ enable the
treatment of a sequence of intersections $B_1 \cap \cdots \cap B_n$
(modulo associativity and commutativity) as if it were a set of types
$\{ B_1, \ldots, B_n \}$, where each $B_i$ is an atom or a function
type.  We say that $A$ is a part of $B$ if $A \inside B$ and we say
that $B$ \emph{contains} $A$ if $A \containedin B$.

\begin{proposition}\label{prop:union-subset-inv}
 \item If $A \cap B \containedin C$, then $A \containedin C$ and $B \containedin C$. 
\end{proposition}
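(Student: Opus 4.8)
The plan is to unfold the definition of $\containedin$ and argue by inversion. Recall from Figure~\ref{fig:aux} that $B$ \emph{contains} $A$ ($A \containedin B$) is meant to say that every part of $A$ is a part of $B$. Concretely, the defining clauses for $\containedin$ split an intersection on the left into its two components, so that $D_1 \cap D_2 \containedin C$ holds precisely when $D_1 \containedin C$ and $D_2 \containedin C$. Since $A \cap B$ is syntactically an intersection, inverting the derivation of $A \cap B \containedin C$ immediately produces the two premises $A \containedin C$ and $B \containedin C$, which is exactly what we want.

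If instead $\containedin$ is presented in the equivalent ``pointwise'' form $A \containedin C \iff (\forall D.\ D \inside A \Rightarrow D \inside C)$, I would argue as follows. To prove $A \containedin C$, fix an arbitrary $D$ with $D \inside A$. The left-injection clause for $\inside$ weakens a part of $A$ to a part of $A \cap B$, so $D \inside A \cap B$; by the hypothesis $A \cap B \containedin C$ we get $D \inside C$. Since $D$ was arbitrary, $A \containedin C$ follows, and the proof of $B \containedin C$ is symmetric, using the right-injection clause for $\inside$ instead of the left one.

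I do not anticipate any real obstacle: the statement is a direct consequence of the fact that $\containedin$ distributes over $\cap$ on the left, equivalently of the monotonicity of $\inside$ under the addition of conjuncts. The only point requiring care is to match the precise formulation of $\inside$ and $\containedin$ fixed in Figure~\ref{fig:aux}; once that is pinned down, the argument is a one-step inversion (in the inductive presentation) or a two-line quantifier manipulation (in the pointwise presentation).
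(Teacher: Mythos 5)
Your proposal is correct: the paper defines $A \containedin B$ in the pointwise form of Figure~\ref{fig:aux}, states this proposition without further proof, and the intended argument is exactly your second paragraph --- given $D \inside A$, weaken to $D \inside A \cap B$ by the left rule for $\inside$, then apply the hypothesis, and symmetrically for $B$. Your first paragraph's ``inductive presentation'' of $\containedin$ is not the paper's definition, but since you correctly handle the pointwise form actually used, there is no gap.
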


\begin{figure}[tbp]

  \fbox{$\dom{A}, \cod{A}$}
  \begin{align*}
  \dom{A \to B} &= A \\
  \dom{A \cap B} &= \dom{A} \,\cap\, \dom {B} \\
  \\
  \cod{A \to B} &= B \\
  \cod{A \cap B} &= \cod{A} \,\cap\, \cod {B}
  \end{align*}

  \fbox{$A \inside B$}
  \begin{gather*}
    \inference{}{\alpha \inside \alpha}  \quad
    \inference{}{A \to B \inside A \to B} \quad
    \inference{A \inside B}{A \inside B \cap C} \quad
    \inference{A \inside C}{A \inside B \cap C}
  \end{gather*}

  \fbox{$A \containedin B$}
  \[
     A \containedin B = \forall C.\, C \inside A \text{ implies } C \inside B
  \]

  \fbox{$\topP{A}$}
  \begin{gather*}
    \inference{}{\topP{\TOP}}
    \quad
    \inference{\topP{B}}{\topP{A \to B}}
    \quad
    \inference{\topP{A} & \topP{B}}{\topP{A \cap B}}
  \end{gather*}

  \fbox{$\mathsf{topInCod}(D)$}
  \[
  \mathsf{topInCod}(D) =
     \exists A B.\, A \to B \inside D \text{ and } \mathsf{top}(B)  
  \]

  \caption{Auxiliary Definitions}
  \label{fig:aux}
\end{figure}

\begin{figure}[tbp]
  \fbox{$A <: B$}
  \begin{gather*}
    \text{(refl$_\alpha$)} \; \inference{}{\alpha <: \alpha} \\[3ex]
    \text{(lb$_L$)} \; \inference{A <: C}{A \cap B <: C} \quad
    \text{(lb$_R$)} \; \inference{B <: C}{A \cap B <: C} \quad
    \text{(glb)} \; \inference{A <: C & A <: D}{A <: C \cap D} \\[3ex]
    (\to') \; \inference{C <: \dom{B} & \cod{B} <: D }{A <: C \to D}
    \begin{array}{l}
      B \containedin A\\
      \neg\, \mathsf{top}(D) \\
      \neg \, \mathsf{topInCod}(B)
      \end{array}\\[3ex]
    (\TOP_{\mathrm{top}}) \; \inference{}{A <: \TOP} \quad
    (\TOP{\to}') \; \inference{}{A <: C \to D}\;\mathsf{top}(D)
  \end{gather*}
  \caption{The New Subtyping System}
  \label{fig:new-subtyping}
\end{figure}

The new intersection subtyping system, with judgments of the form $A
<: B$, is defined in Figure~\ref{fig:new-subtyping}. First, it does
not include the (trans) rule.  It also replaces the (refl) rule with
reflexivity for atoms (refl$_\alpha$). The most important rule is the
one for function types $(\to')$, which subsumes $(\to)$ and
$({\to}{\cap})$ in BCD subtyping.  The $(\to')$ rule essentially turns
the $\beta$-soundness property into a subtyping
rule~\citep{Barendregt:2013aa}. The $(\to')$ rule says that a type $A$
is a subtype of a function type $C \to D$ if a type contained in $A$,
call it $B$, has domain and codomain that are larger and smaller than
$C$ and $D$, respectively. The use of containment enables this rule to
absorb uses of (incl$_L$) and (incl$_R$) on the left.  The side
conditions $\neg\;\topP{B}$ and $\neg\;\topInCod{D}$ are needed
because of the $(\TOP{\to}')$ rule, which in turn is needed to
preserve types under $\eta$-reduction.  In the many systems that do
not involve $\eta$-reduction, the $(\TOP{\to}')$ rule can be omitted,
as well as these side conditions. The rules (lb$_L$) and (lb$_R$)
adapt (incl$_L$) and (incl$_R$) to a system without transitivity, and
have appeared many times in the literature~\citep{Bakel:1995aa}.  The
$(\TOP{\to}')$ rule generalizes the $(\TOP{\to})$ rule, replacing the
$\TOP$ on the left with any type $A$, because for transitivity, any
type is below $\TOP$. The $(\TOP{\to}')$ rule replaces the $\TOP$ in
the codomain on the right with any type $D$ that is equivalent to
$\TOP$.

Before moving on, we take note of some basic facts regarding the $<:$
relation and the $\topP{A}$ predicate.

\begin{proposition}[Basic Properties of $<:$]\ \label{prop:subtyping}
  \begin{enumerate}
  \item (reflexivity) $A <: A$ \label{prop:⊑-refl}
  \item If $A <: B \cap C$, then $A <: B$ and $A <: C$. \label{prop:⊔⊑-inv}
  \item If $A <: B$ and $C \inside B$, then $A <: C$.\label{prop:in-sub-sub}
  \item If $A <: B$ and $C \containedin B$, then $A <: C$.\label{prop:subset-sub-sub}
  \end{enumerate}
\end{proposition}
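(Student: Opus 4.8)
My plan is to establish the four parts in order, since part~\ref{prop:in-sub-sub} uses part~\ref{prop:⊔⊑-inv} and part~\ref{prop:subset-sub-sub} uses part~\ref{prop:in-sub-sub}, while reflexivity stands alone. For \textbf{reflexivity} (part~\ref{prop:⊑-refl}) I would do structural induction on $A$. The atom case is (refl$_\alpha$), and since $\TOP$ is an atom this also covers $\TOP <: \TOP$. For $A = A_1 \cap A_2$, apply (glb) to $A_1 \cap A_2 <: A_1$ and $A_1 \cap A_2 <: A_2$, each of which follows from (lb$_L$)/(lb$_R$) together with the induction hypotheses on $A_1$ and $A_2$. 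The delicate case is $A = A_1 \to A_2$. If $\topP{A_2}$, rule $(\TOP{\to}')$ directly gives $A_1 \to A_2 <: A_1 \to A_2$. Otherwise apply $(\to')$ with $B := A_1 \to A_2$ as the contained type: $B \containedin A_1 \to A_2$ holds because $\containedin$ is reflexive by its definition; $\dom{B} = A_1$ and $\cod{B} = A_2$; the premises $A_1 <: \dom{B}$ and $\cod{B} <: A_2$ are the induction hypotheses; and both side conditions $\neg\,\topP{A_2}$ and $\neg\,\topInCod{B}$ hold, the latter because the only part of $A_1 \to A_2$ is itself, so $\topInCod{A_1 \to A_2}$ is equivalent to $\topP{A_2}$.

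For part~\ref{prop:⊔⊑-inv} I would induct on the derivation of $A <: B \cap C$ and inspect the last rule. Only (glb), (lb$_L$), and (lb$_R$) can have an intersection as the right-hand side of the conclusion; all other rules conclude with an atom or a function type. In the (glb) case the two premises are exactly $A <: B$ and $A <: C$. In the (lb$_L$) case $A = A_1 \cap A_2$ with premise $A_1 <: B \cap C$, so the induction hypothesis gives $A_1 <: B$ and $A_1 <: C$, and re-applying (lb$_L$) yields $A_1 \cap A_2 <: B$ and $A_1 \cap A_2 <: C$; the (lb$_R$) case is symmetric.

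For part~\ref{prop:in-sub-sub} I would induct on the derivation of $C \inside B$. When $C = B$ (the axioms $\alpha \inside \alpha$ and $A' \to B' \inside A' \to B'$) the conclusion is just the hypothesis $A <: B$. When $B = B_1 \cap B_2$ with $C \inside B_i$, part~\ref{prop:⊔⊑-inv} applied to $A <: B_1 \cap B_2$ gives $A <: B_i$, and the induction hypothesis gives $A <: C$. For part~\ref{prop:subset-sub-sub} I would induct on the structure of $C$: if $C$ is an atom or a function type then $C \inside C$, so $C \containedin B$ yields $C \inside B$ and part~\ref{prop:in-sub-sub} gives $A <: C$; if $C = C_1 \cap C_2$, Proposition~\ref{prop:union-subset-inv} gives $C_1 \containedin B$ and $C_2 \containedin B$, the induction hypotheses give $A <: C_1$ and $A <: C_2$, and (glb) concludes $A <: C_1 \cap C_2$.

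I expect the only real obstacle to be the function-type case of reflexivity: one has to see that the side conditions on $(\to')$ are exactly what is needed to hand off to $(\TOP{\to}')$, so that a single case split on $\topP{A_2}$ covers everything, and that when the contained type $B$ is the function type $A_1 \to A_2$ itself, which has no proper parts, its side condition $\neg\,\topInCod{B}$ coincides with $\neg\,\topP{A_2}$. Everything else is a routine induction --- on the type in parts~\ref{prop:⊑-refl} and~\ref{prop:subset-sub-sub}, and on the relevant derivation in parts~\ref{prop:⊔⊑-inv} and~\ref{prop:in-sub-sub}.
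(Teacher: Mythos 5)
Your proposal is correct and follows essentially the same route as the paper: induction on $A$ with a case split on $\topP{A_2}$ for reflexivity, induction on the derivation for part~\ref{prop:⊔⊑-inv}, induction along the structure of $B$ (equivalently, on the $\inplus$ derivation, as you do) for part~\ref{prop:in-sub-sub}, and induction on $C$ using Proposition~\ref{prop:union-subset-inv} and (glb) for part~\ref{prop:subset-sub-sub}. Your observation that for $B = A_1 \to A_2$ the side condition $\neg\,\topInCod{B}$ collapses to $\neg\,\topP{A_2}$ correctly fills in the one detail the paper leaves implicit.
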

\begin{proof}\ 
  \begin{enumerate}
  \item The proof of reflexivity is by induction on $A$. In the case
    $A = A_1 \to A_2$, we proceed by cases on whether $\topP{A_2}$.
    If it is, deduce $A_1 \to A_2 <: A_1 \to A_2$ by rule $(\TOP{\to}')$.
    Otherwise, apply rule $(\to')$
  \item The proof is by induction on the derivation of $A <: B \cap C$.
  \item The proof is by induction on $B$. In the case where $B = B_1
    \cap B_2$, either $C \inside B_1$ or $C \inside B_2$, but in either case part
    \ref{prop:⊔⊑-inv} of this proposition fulfills the premise of
    the induction hypothesis, from which the conclusion follows.
  \item The proof  is by induction on $C$, using part \ref{prop:in-sub-sub}
    of this proposition in the cases for atoms and function types.
  \end{enumerate}
\end{proof}

\begin{proposition}[Properties of $\topP{A}$]\ \label{prop:top}
  \begin{enumerate}
  \item If $\topP{A}$ then $\topP{\cod{A}}$.\label{prop:AllBot-cod}
  \item If $\topP{A}$ and $B \inside A$, then $\topP{B}$.\label{prop:AllBot-in}
  \item If $\topP{A}$ and $B \containedin A$, then $\topP{B}$.\label{prop:AllBot-subset}
  \item If $\topP{A}$ and $A <: B$, then $\topP{B}$.\label{prop:AllBot-⊑}
  \item If $\topP{A}$, then $B <: A$.\label{prop:AllBot-⊑-any}
  \end{enumerate}
\end{proposition}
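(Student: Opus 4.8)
The plan is to establish the five parts in sequence, since part~\ref{prop:AllBot-subset} relies on part~\ref{prop:AllBot-in}, and part~\ref{prop:AllBot-⊑} relies on parts~\ref{prop:AllBot-cod} and~\ref{prop:AllBot-subset}, while parts~\ref{prop:AllBot-cod}, \ref{prop:AllBot-in}, and~\ref{prop:AllBot-⊑-any} stand on their own. For part~\ref{prop:AllBot-cod}, I would use structural induction on $A$, recalling that $\cod{A}$ is assumed to be defined, so $A$ is a function type or an intersection. If $A = A_1 \to A_2$ then $\cod{A} = A_2$, and the only $\topP{}$ rule with a function type in its conclusion forces $\topP{A_2}$. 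If $A = A_1 \cap A_2$ then $\cod{A} = \cod{A_1} \cap \cod{A_2}$, inversion of $\topP{A}$ yields $\topP{A_1}$ and $\topP{A_2}$, and the induction hypotheses combined with the rule for $\topP{\cdot\cap\cdot}$ finish the case.

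For part~\ref{prop:AllBot-in}, I would induct on the derivation of $B \inside A$. The two base cases give $B = A$, so there is nothing to prove; in the two step cases $A = A_1 \cap A_2$ and $B$ is a part of one conjunct $A_i$, so inversion of $\topP{A_1 \cap A_2}$ supplies $\topP{A_i}$ and the induction hypothesis applies. Part~\ref{prop:AllBot-subset} then follows by structural induction on $B$: when $B$ is an atom or a function type, $B \inside B$ holds, hence $B \inside A$ by the definition of containment, and part~\ref{prop:AllBot-in} concludes; when $B = B_1 \cap B_2$, Proposition~\ref{prop:union-subset-inv} gives $B_1 \containedin A$ and $B_2 \containedin A$, and the induction hypotheses together with the $\topP{\cdot\cap\cdot}$ rule conclude.

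For part~\ref{prop:AllBot-⊑}, I would induct on the derivation of $A <: B$. The cases (refl$_\alpha$), $(\TOP_{\mathrm{top}})$, and $(\TOP{\to}')$ are immediate, the last using its side condition $\topP{D}$; the cases (lb$_L$), (lb$_R$), and (glb) follow from the induction hypotheses and inversion of $\topP{\cdot\cap\cdot}$. The rule $(\to')$, which derives $A <: C \to D$ from $C <: \dom{B}$ and $\cod{B} <: D$ with $B \containedin A$, $\neg\,\topP{D}$, and $\neg\,\topInCod{B}$, is the main obstacle, and I expect to discharge it by deriving a contradiction from the hypotheses rather than by a direct argument: from $\topP{A}$ and $B \containedin A$, part~\ref{prop:AllBot-subset} gives $\topP{B}$; then part~\ref{prop:AllBot-cod} gives $\topP{\cod{B}}$ (noting $\cod{B}$ is defined because it appears in a premise); then the induction hypothesis applied to $\cod{B} <: D$ gives $\topP{D}$, contradicting the side condition $\neg\,\topP{D}$. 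Hence this case is vacuous.

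Finally, for part~\ref{prop:AllBot-⊑-any} I would induct on the derivation of $\topP{A}$: if $A = \TOP$, apply $(\TOP_{\mathrm{top}})$; if $A = A_1 \to A_2$ with $\topP{A_2}$, apply $(\TOP{\to}')$ using the side condition $\topP{A_2}$; and if $A = A_1 \cap A_2$, the induction hypotheses give $B <: A_1$ and $B <: A_2$, so (glb) concludes. The only genuinely delicate point in the whole proposition is the vacuity argument for the $(\to')$ case of part~\ref{prop:AllBot-⊑}, which is exactly what the side conditions on $(\to')$ were designed to make possible.
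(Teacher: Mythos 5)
Your proposal is correct and takes essentially the same route as the paper: parts 1--3 by the same inductions (with Proposition~\ref{prop:union-subset-inv} in the intersection case of part 3), part 4 by induction on the derivation of $A <: B$ using parts 3 and 1 to feed the induction hypothesis in the $(\to')$ case, and part 5 by a straightforward induction. The only cosmetic difference is that in the $(\to')$ case of part 4 the paper uses the derived $\topP{D}$ to conclude the goal $\topP{C \to D}$ directly via the function-type rule for $\topP{\cdot}$, whereas you read the same fact as contradicting the side condition $\neg\,\topP{D}$ and declare the case vacuous; the underlying steps are identical.
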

\begin{proof}\
  \begin{enumerate}
  \item The proof is a straightforward induction on $A$.
  \item The proof is also a straightforward induction on $A$.
  \item The proof is by induction on $B$. The cases for atoms and
    function types are proved by part \ref{prop:AllBot-in} of
    this proposition. In the case for
    $B = B_1 \cap B_2$, from $B_1 \cap B_2 \containedin A$, we have
    $B_1 \containedin A$ and $B_2 \containedin A$
    (Proposition~\ref{prop:union-subset-inv}).
    Then by the induction hypotheses for $B_1$ and $B_2$ we have
    $\topP{B_1}$ and $\topP{B_2}$, from which we conclude
    that $\topP{B_1 \cap B_2}$.
  \item The proof is by induction on the derivation of $A <: B$.
    All of the cases are straightforward except for rule $(\to')$.
    In that case we have $B = B_1 \to B_2$ and some $A'$ such that
    $A' \containedin A$, $B_1 <: \dom{A'}$, $\cod{A'} <: B_2$, $\neg\,\topP{B_2}$,
    and $\neg\,\topInCod{A'}$. From the premise $\topP{A}$
    and part \ref{prop:AllBot-subset} of this proposition, we
    have $\topP{A'}$. Then by part \ref{prop:AllBot-cod}
    we have $\topP{\cod{A'}}$. By the induction hypothesis
    for $\cod{A'} <: B$ we conclude that $\topP{B}$.
  \item The proof is a straightforward induction on $A$.
  \end{enumerate}
\end{proof}

Next we turn to the subtyping inversion principle for function types.
The idea is to generalize the rule $(\to')$ with respect to the type on
the right, allowing any type that contains a function type.  The
premises of $(\to')$ are somewhat complex, so we package most of them
into the following definition.

\begin{definition}[factors]
  We say $C \to D$ \emph{factors} $A$
  if there exists some type $B$ such that
  $B \containedin A$, $C <: \dom{B}$, $\cod{B} <: D$, and
  $\neg\,\mathsf{topInCod}(B)$.
\end{definition}

\begin{proposition}[Inversion Principle for Function Types]
  \label{prop:⊑-fun-inv}\ \\
  If $A <: B$, $C \to D \inside B$, and $\neg\,\mathsf{top}(D)$, then
  $C \to D$ factors $A$.
\end{proposition}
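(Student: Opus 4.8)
The plan is to induct on the derivation of $A <: B$, keeping $C \to D$ and the hypotheses $C \to D \inside B$ and $\neg\,\mathsf{top}(D)$ fixed throughout. Two elementary facts about $\inside$ drive the whole argument: the only types that can be a \emph{part} of another type are atoms and function types, so $X \inside \alpha$ forces $X = \alpha$ and $X \inside C' \to D'$ forces $X = C' \to D'$; and inverting $X \inside B_1 \cap B_2$ yields $X \inside B_1$ or $X \inside B_2$, while conversely each $X \inside B_i$ gives $X \inside B_1 \cap B_2$, so $\containedin$ is monotone in its right-hand argument under adding conjuncts. All of these are immediate from the definitions in Figure~\ref{fig:aux}.

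First I would dispatch the easy cases. The cases (refl$_\alpha$) and $(\TOP_{\mathrm{top}})$ are vacuous because their right-hand type is an atom, so $C \to D \inside B$ is impossible; $(\TOP{\to}')$ is vacuous because there $B = C' \to D'$ with $\mathsf{top}(D')$, and $C \to D \inside C' \to D'$ forces $D = D'$, contradicting $\neg\,\mathsf{top}(D)$. For (glb), $B = B_1 \cap B_2$ and inverting $C \to D \inside B$ gives $C \to D \inside B_i$ for some $i$, so the induction hypothesis on the matching premise already yields that $C \to D$ factors $A$. For (lb$_L$) (and symmetrically (lb$_R$)), $A = A_1 \cap A_2$ with $A_1 <: B$; the induction hypothesis produces a witness $B'$ with $B' \containedin A_1$, $C <: \dom{B'}$, $\cod{B'} <: D$, and $\neg\,\mathsf{topInCod}(B')$, and since every part of $A_1$ is a part of $A_1 \cap A_2$ we get $B' \containedin A$, so the same $B'$ works for $A$.

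The only case with real content is $(\to')$, and even there the argument amounts to reading off the rule's premise. In this case $B = C' \to D'$ and the rule supplies a type $B'$ with $B' \containedin A$, $C' <: \dom{B'}$, $\cod{B'} <: D'$, and $\neg\,\mathsf{topInCod}(B')$. Because $C \to D \inside C' \to D'$ forces $C = C'$ and $D = D'$, this very $B'$ is precisely the witness required by the definition of ``factors''. I do not expect a genuine obstacle here: the proposition is stated so that its conclusion repackages exactly the premise of $(\to')$, and the side conditions $\neg\,\mathsf{top}(D)$ and $\neg\,\mathsf{topInCod}(B')$ appearing in the rules line up with the hypothesis of the proposition and the last clause of ``factors''. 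The only points needing a moment of care are the two observations about $\inside$ noted in the first paragraph and the monotonicity of $\containedin$, all of which follow directly from the definitions.
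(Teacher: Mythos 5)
Your proposal is correct and matches the paper's proof, which is exactly the same "straightforward induction on the derivation of $A <: B$" that you carry out (the paper simply leaves the case analysis implicit). Your case-by-case details — vacuity of (refl$_\alpha$), $(\TOP_{\mathrm{top}})$, and $(\TOP{\to}')$, passing the witness through (lb$_L$), (lb$_R$), and (glb) using the basic inversion facts about $\inplus$, and reading the witness directly off the premise of $(\to')$ — are all accurate.
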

\begin{proof}
  The proof is a straightforward induction on $A <: B$.
\end{proof}

\section{Proof of Transitivity of Subtyping}
\label{sec:trans}

Our proof of transitivity relies on the following lemma, which is
traditionally needed to prove the inversion principle for function
types. (In our system this lemma is not needed to prove the inversion
principle because the rule $(\to')$ is already quite close to the
inversion principle.)  The lemma states that if every function type $C
\to D$ in A factors $B$, then $\dom{A} \to \cod{A}$ also factors $B$.

\begin{lemma}\label{lem:sub-inv-trans}
  If
  \begin{itemize}
  \item for any $C$ $D$, if $C \to D \inside A$ and $\neg\,\mathsf{top}(D)$,
    then $C \to D$ factors $B$, and
  \item $\neg\, \mathsf{topInCod}(A)$,
  \end{itemize}
  then $\dom{A} \to \cod{A}$ factors $B$.
\end{lemma}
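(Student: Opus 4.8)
The plan is to prove the lemma by induction on the structure of $A$. Because the statement presupposes that $\dom{A}$ and $\cod{A}$ are defined, $A$ is, modulo associativity and commutativity, an intersection of function types; in particular the atom case is vacuous, and in the intersection case definedness of $\dom{A}$ and $\cod{A}$ propagates to both conjuncts, so only two cases really occur.

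In the base case $A = C \to D$ we have $\dom{A} = C$ and $\cod{A} = D$. Since $C \to D \inside C \to D$, and $\neg\,\mathsf{topInCod}(A)$ forces $\neg\,\mathsf{top}(D)$, the first hypothesis instantiated at $C$ and $D$ immediately gives that $C \to D$, that is $\dom{A} \to \cod{A}$, factors $B$.

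In the inductive case $A = A_1 \cap A_2$ we have $\dom{A} = \dom{A_1} \cap \dom{A_2}$ and $\cod{A} = \cod{A_1} \cap \cod{A_2}$. Both hypotheses restrict to each conjunct: any part $C \to D$ of $A_i$ is also a part of $A_1 \cap A_2$, and $\neg\,\mathsf{topInCod}(A_1 \cap A_2)$ implies $\neg\,\mathsf{topInCod}(A_i)$. So the induction hypothesis supplies, for each $i \in \{1,2\}$, a witness $B_i$ with $B_i \containedin B$, $\dom{A_i} <: \dom{B_i}$, $\cod{B_i} <: \cod{A_i}$, and $\neg\,\mathsf{topInCod}(B_i)$. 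I would then take $B_1 \cap B_2$ as the witness that $\dom{A} \to \cod{A}$ factors $B$, checking its four conditions with no appeal to transitivity: $B_1 \cap B_2 \containedin B$ because every part of $B_1 \cap B_2$ is a part of $B_1$ or of $B_2$; $\dom{A_1} \cap \dom{A_2} <: \dom{B_1} \cap \dom{B_2}$ by (glb) applied to (lb$_L$) and (lb$_R$) instances of $\dom{A_i} <: \dom{B_i}$; $\cod{B_1} \cap \cod{B_2} <: \cod{A_1} \cap \cod{A_2}$ symmetrically from $\cod{B_i} <: \cod{A_i}$; and $\neg\,\mathsf{topInCod}(B_1 \cap B_2)$ because a function part of $B_1 \cap B_2$ with a top codomain would be a function part of $B_1$ or of $B_2$. (The identities $\dom{B_1 \cap B_2} = \dom{B_1} \cap \dom{B_2}$ and $\cod{B_1 \cap B_2} = \cod{B_1} \cap \cod{B_2}$ are available because $\dom{B_i}$ and $\cod{B_i}$ are defined, which the factoring conditions already require.)

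The proof is largely routine; the one point that needs care is that, since this lemma is the engine of the transitivity proof, it must not itself use transitivity. The device that makes this work is combining the two induction-hypothesis witnesses into the single intersection $B_1 \cap B_2$, so that the required inequalities on domains and codomains follow directly from (glb), (lb$_L$), and (lb$_R$) rather than by chaining $\dom{A} <: \dom{A_i} <: \dom{B_i}$. I do not anticipate a genuine obstacle beyond the bookkeeping about which sub-parts the hypotheses apply to and the harmless partiality of $\dom{}$ and $\cod{}$.
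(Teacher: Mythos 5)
Your proof is correct and takes essentially the same route as the paper's: induction on $A$, with the atom case vacuous, the arrow case discharged by instantiating the hypothesis (using $\neg\,\topInCod{A}$ to obtain $\neg\,\topP{D}$), and the intersection case combining the two induction-hypothesis witnesses into the single witness $B_1 \cap B_2$, whose domain and codomain bounds follow from (lb$_L$), (lb$_R$), and (glb) without any appeal to transitivity. The only difference is cosmetic: you write the two inequalities in the orientation actually required by the definition of factoring, namely $\dom{A} <: \dom{B_1 \cap B_2}$ and $\cod{B_1 \cap B_2} <: \cod{A}$, which is what the paper's argument intends.
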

\begin{proof}
  The proof is by induction on $A$.
  \begin{itemize}
  \item Case $A$ is an atom. The statement is vacuously true.
  \item Case $A = A_1 \to A_2$ is a function type. Then we conclude by applying
    the premise with $C$ and $D$ instantiated to $A_1$ and $A_2$ respectively.
  \item Case $A = A_1 \cap A_2$.  By the induction hypothesis for $A_1$
    and for $A_2$, we have that $\dom{A_1} \to \cod{A_1}$ factors $B$
    and so does $\dom{A_2} \to \cod{A_2}$.  So there exists
    $B_1$ and $B_2$ such that $B_1 \containedin B$, $\neg\,\topInCod{B_1}$,
    $\dom{A_1} <: \dom{B_1}$, $\cod{B_1} <: \cod{A_2}$ and similarly
    for $B_2$. We need to show that $\dom{A} \to \cod{A}$ factors
    $B$. We choose the witness $B_1 \cap B_2$.  Clearly we have $B_1 \cap
    B_2 \containedin B$ and $\neg \topInCod{B_1 \cap B_2}$.  Also, we have
    \[
    \dom{B_1} \cap \dom{B_2} <: \dom{A_1} \cap \dom{A_2}
    \]
    and
    \[
    \cod{A_1} \cap \cod{A_2} <: \cod{B_1} \cap \cod{B_2}
    \]
    Thus, we have that $\dom{B_1 \cap B_2} <: \dom{A}$
    and $\cod{A} <: \cod{B_1 \cap B_2}$, and this case is complete.
  \end{itemize}
\end{proof}

We now turn to the proof of transitivity, that if $A <: B$ and $B <:
C$, then $A <: C$.
The proof is by well-founded induction on the lexicographical ordering
of the depth of $B$, the size of $B$, and then the size of $C$.  To be
precise, we define this ordering as follows.
\[
\langle A, B, C \rangle \ll \langle A', B', C' \rangle
=
\begin{array}{l}
\depth{B} < \depth{B'} \\
\text{ or } \depth{B} \leq \depth{B'} \text{ and } \size{B} < \size{B'} \\
\text{ or }
  \depth{B} \leq \depth{B'} \text{ and } \size{B} \leq \size{B'}\\
  \;\quad \text{ and } \size{C} < \size{C'}
\end{array}
\]
%
where $\size{A}$ is 
\begin{align*}
  \size{\alpha} &= 0 \\
  \size{A \to B} &= 1 + \size{A} + \size{B} \\
  \size{A \cap B} &= 1 + \size{A} + \size{B}
\end{align*}
and $\depth{A}$ is 
\begin{align*}
  \depth{\alpha} &= 0 \\
  \depth{A \to B} &= 1 + \max(\depth{A}, \depth{B}) \\
  \depth{A \cap B} &= \max(\depth{A}, \depth{B})
\end{align*}

\begin{theorem}[Transitivity of $<:$]\label{thm:⊑-trans}
    If $A <: B$ and $B <: C$, then $A <: C$.
\end{theorem}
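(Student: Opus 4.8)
The plan is to carry out the well-founded induction on $\ll$ just set up, and within it to case-split on the last rule used to derive $B <: C$; in each case we either close the goal immediately or invoke the induction hypothesis on a strictly $\ll$-smaller triple.

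If $B <: C$ ends in (refl$_\alpha$), $(\TOP_{\mathrm{top}})$, or $(\TOP{\to}')$, the case is immediate: either $C = B$, so the hypothesis $A <: B$ already gives $A <: C$, or $C$ is $\TOP$ (resp.\ a function type $C_1 \to C_2$ with $\topP{C_2}$) and $A <: C$ follows by the same premise-free rule. If $B <: C$ ends in (glb), then $C = C_1 \cap C_2$ with $B <: C_1$ and $B <: C_2$; since $\size{C_i} < \size{C}$ and $B$ is unchanged, the induction hypothesis on $A <: B$ and each $B <: C_i$ gives $A <: C_i$, and (glb) finishes. If $B <: C$ ends in (lb$_L$) — (lb$_R$) is symmetric — then $B = B_1 \cap B_2$ with $B_1 <: C$; Proposition~\ref{prop:subtyping} turns $A <: B$ into $A <: B_1$, and since $\depth{B_1} \le \depth{B}$ and $\size{B_1} < \size{B}$, the induction hypothesis on $A <: B_1$ and $B_1 <: C$ yields $A <: C$.

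The real work is the case where $B <: C$ ends in $(\to')$: then $C = C_1 \to C_2$ with $\neg\,\topP{C_2}$, and there is a $B'$ with $B' \containedin B$, $C_1 <: \dom{B'}$, $\cod{B'} <: C_2$, and $\neg\,\topInCod{B'}$. Because $B'$ is contained in $B$ and has defined domain and codomain, it is an intersection of function types each of which is a part of $B$; hence $\depth{\dom{B'}} < \depth{B}$ and $\depth{\cod{B'}} < \depth{B}$. By Lemma~\ref{lem:sub-inv-trans}, to conclude that $\dom{B'} \to \cod{B'}$ factors $A$ it suffices to check $\neg\,\topInCod{B'}$, which is a side condition above, and that every $P \to Q \inside B'$ with $\neg\,\topP{Q}$ factors $A$; the latter holds by the inversion principle for function types (Proposition~\ref{prop:⊑-fun-inv}) applied to $A <: B$ and $P \to Q \inside B$, valid since $B' \containedin B$. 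So there is an $A'$ with $A' \containedin A$, $\dom{B'} <: \dom{A'}$, $\cod{A'} <: \cod{B'}$, and $\neg\,\topInCod{A'}$. The depth inequalities now license the two remaining transitivity sub-calls: from $C_1 <: \dom{B'}$ and $\dom{B'} <: \dom{A'}$ the induction hypothesis gives $C_1 <: \dom{A'}$, and from $\cod{A'} <: \cod{B'}$ and $\cod{B'} <: C_2$ it gives $\cod{A'} <: C_2$. Finally rule $(\to')$, with witness $A'$ and side conditions $A' \containedin A$, $\neg\,\topP{C_2}$, $\neg\,\topInCod{A'}$, concludes $A <: C_1 \to C_2 = C$.

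The anticipated obstacle is entirely in this last case, and it is exactly why the measure leads with the depth of the middle type: the two recursive transitivity calls have $\dom{B'}$ and $\cod{B'}$ as their middle components, and these are $\ll$-below $\langle A,B,C\rangle$ only through the strict decrease in depth, since their sizes need not be smaller than $\size{B}$. The one slightly delicate point is verifying that $B' \containedin B$ with $\dom{B'}$ and $\cod{B'}$ defined forces $B'$ to be a conjunction of function-type parts of $B$, so that this depth decrease genuinely holds; the (lb) and (glb) cases only use the size components of the measure and are routine.
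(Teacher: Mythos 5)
Your proof is correct and follows essentially the same route as the paper: well-founded induction on $\ll$, the same case analysis on the last rule of $B <: C$, and in the $(\to')$ case the same combination of the inversion principle (Proposition~\ref{prop:⊑-fun-inv}) with Lemma~\ref{lem:sub-inv-trans}, using the strict depth decrease of $\dom{B'}$ and $\cod{B'}$ to justify the two recursive calls. The only (immaterial) difference is that you apply the inversion principle directly to $A <: B$ via $B' \containedin B$, whereas the paper first derives $A <: B'$ by Proposition~\ref{prop:subtyping} part \ref{prop:subset-sub-sub} and inverts that; your observation that $B'$ must be an intersection of function-type parts of $B$ is exactly the point the paper leaves implicit.
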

\begin{proof}
  The proof is by well-founded induction on the relation $\ll$.
  We proceed by cases on the last rule applied in the
  derivation of $B <: C$.
  \begin{description}
  \item[Case (refl$_\alpha$)] We have $B = C = \alpha$.  From the premise $A <:
    B$ we immediately conclude that $A <: \alpha$.
  \item[Case (lb$_L$)] So $B = B_1 \cap B_2$, $A <: B_1 \cap
    B_2$, and $B_1 <: C$.
    We have $A <: B_1$ (Proposition~\ref{prop:subtyping} part
    \ref{prop:⊔⊑-inv}), so we conclude that $A <: C$ by the induction
    hypothesis, noting that $\langle A, B_1, C \rangle \ll \langle A,
    B, C \rangle$
    because $\depth{B_1} \le \depth{B}$
    and $\size{B_1} < \size{B}$.
  \item[Case (lb$_R$)] So $B = B_1 \cap B_2$, $A <:
    B_1 \cap B_2$, and $B_2 <: C$.  We have $A <: B_2$
    (Proposition~\ref{prop:subtyping} part \ref{prop:⊔⊑-inv}), so we
    conclude that $A <: C$ by the induction hypothesis, noting that
    $\langle A, B_2, C \rangle \ll \langle A, B, C \rangle$ because
    $\depth{B_2} \leq \depth{B}$ and $\size{B_2} < \size{B}$.
    
  \item[Case (glb)] We have $C = C_1 \cap C_2$, $B <: C_1$, and $B <:
    C_2$.  By the induction hypothesis, we have $A <: C_1$ and $A <:
    C_2$, noting that $\langle A, B, C_1 \rangle \ll \langle A, B, C
    \rangle$ and $\langle A, B, C_2 \rangle \ll \langle A, B, C\rangle$
    because $\size{C_1} < C$ and $\size{C_2} < C$.
    We conclude $A <: C_1 \cap C_2$ by rule (glb).
  \item[Case $(\to')$] So $C = C_1 \to C_2$, $\neg \topP{C_2}$, and
    there exists $B'$ such that $C_1 <: \dom{B'}$, $\cod{B'} <: C_2$,
    $B' \containedin B$, and $\neg \topInCod{B'}$. From $A <: B$ and $B' \containedin B$, we
    have $A <: B'$ (Proposition~\ref{prop:subtyping} part
    \ref{prop:subset-sub-sub}). Thus, for any $B_1 \to B_2 \inside B'$,
    $B_1 \to B_2$ factors $A$ (Proposition~\ref{prop:⊑-fun-inv}). We
    have satisfied the premises of Lemma~\ref{lem:sub-inv-trans}, so
    $\dom{B'} \to \cod{B'}$ factors $A$. That means there exists $A'$
    such that $A' \containedin A$, $\neg\,\topInCod{A'}$, $\dom{B'} <: \dom{A'}$,
    and $\cod{A'} <: \cod{B'}$. Then by the induction hypothesis, we
    have
    \[
    C_1 <: \dom{A'} \quad\text{and}\quad \cod{A'} <: C_2
    \]
    noting that
    \[
    \langle C_1, \dom{B'}, \dom{A'} \rangle \ll \langle A, B, C \rangle
    \]
    and
    \[
    \langle \cod{A'}, \cod{B'}, C_2 \rangle \ll \langle A, B, C \rangle
    \]
    because $\depth{\dom{B'}} < \depth{B}$
    and $\depth{\cod{B'}} < \depth{B}$.
    We conclude that $A <: C_1 \to C_2$ by rule $(\to')$ witnessed by $A'$.
  \item[Case $(\TOP_{\mathrm{top}})$] We have $C = \TOP$ and
    conclude $A <: \TOP$ by rule $(\TOP_{\mathrm{top}})$.
  \item[Case $(\TOP{\to}')$] We have $C = C_1 \to C_2$ and $\topP{C_2}$.
    We conclude $A <: C_1 \to C_2$ by rule $(\TOP{\to}')$.
  \end{description}
\end{proof}

\section{Equivalence with BCD Subtyping}
\label{sec:equiv}

Having proved (trans), we next prove $(\to)$ and $({\to}{\cap})$
and then show that $A <: B$ is equivalent to $A \leq B$.

\begin{lemma}[$\to$]\label{lem:⊑-fun′}
  If $C <: A$ and $B <: D$, then
  $A \to B <: C \to D$.
\end{lemma}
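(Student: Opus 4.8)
The plan is to prove this directly by a case analysis on whether $\topP{D}$ holds; no induction is needed, since the new $(\to')$ rule is already close to an inversion principle. If $\topP{D}$, then the conclusion $A \to B <: C \to D$ follows immediately from the $(\TOP{\to}')$ rule, which ignores its arguments.

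Otherwise assume $\neg\,\topP{D}$. The plan is to apply $(\to')$, taking the witnessing part (called $B$ in that rule's statement) to be $A \to B$ itself. Since $A \to B \inside A \to B$, we have $A \to B \containedin A \to B$, so the containment side condition holds. The two subtyping premises of $(\to')$ then specialize to $C <: \dom{A \to B}$, i.e.\ $C <: A$, and $\cod{A \to B} <: D$, i.e.\ $B <: D$, which are exactly the hypotheses. The side condition $\neg\,\topP{D}$ is the case assumption. It remains to establish $\neg\,\topInCod{A \to B}$: the only function type that is a part of $A \to B$ is $A \to B$ itself, so $\topInCod{A \to B}$ is equivalent to $\topP{B}$; but $\topP{B}$ together with the hypothesis $B <: D$ would force $\topP{D}$ by Proposition~\ref{prop:top}(\ref{prop:AllBot-⊑}), contradicting the case assumption. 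Hence $\neg\,\topP{B}$, so $\neg\,\topInCod{A \to B}$, and rule $(\to')$ applies, yielding $A \to B <: C \to D$.

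The only delicate point, and the nearest thing to an obstacle, is that $\neg\,\topInCod{A \to B}$ is not among the hypotheses and must be derived; the key observation is that it collapses to $\neg\,\topP{B}$, which is obtained from $\neg\,\topP{D}$ and $B <: D$ by the upward closure of $\topP{\cdot}$ under $<:$.
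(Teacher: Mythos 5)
Your proposal is correct and follows essentially the same route as the paper: a case split on $\topP{D}$, using $(\TOP{\to}')$ in the first case and $(\to')$ with witness $A \to B$ in the second, where $\neg\,\topP{B}$ (hence $\neg\,\topInCod{A \to B}$) is obtained from $\neg\,\topP{D}$ and $B <: D$ via Proposition~\ref{prop:top}(\ref{prop:AllBot-⊑}). You merely spell out the side-condition checks that the paper leaves implicit.
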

\begin{proof}
  Consider whether $\topP{D}$ or not.
  \begin{description}
  \item[Case $\topP{D}$] We conclude $A \to B <: C \to D$
    by rule $(\TOP{\to}')$.
  \item[Case $\neg \,\topP{D}$]
    Consider whether $\topP{B}$ or not.
    \begin{description}
    \item[Case $\topP{B}$] So $\topP{D}$ (Prop. \ref{prop:top}
      part \ref{prop:AllBot-⊑}), but that is a contradiction.
    \item[Case $\neg\,\topP{B}$]
      We conclude that $A \to B <: C \to D$ by rule $(\to')$.
    \end{description}
  \end{description}
\end{proof}

\begin{lemma}[${\to}{\cap}$]\label{lem:⊑-dist}
  $(A \to B) \cap (A \to C) <: A \to (B \cap C)$
\end{lemma}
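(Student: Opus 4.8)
The plan is to prove this directly, with no induction, by a case analysis on whether $\topP{B \cap C}$ holds, applying either rule $(\TOP{\to}')$ or rule $(\to')$ to the conclusion $A \to (B \cap C)$. If $\topP{B \cap C}$, then rule $(\TOP{\to}')$ immediately yields $(A \to B) \cap (A \to C) <: A \to (B \cap C)$, so the work is all in the case $\neg\,\topP{B \cap C}$.

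In that case I would apply rule $(\to')$ with $A \to (B \cap C)$ playing the role of $C \to D$, which discharges the side condition $\neg\,\topP{B \cap C}$ for free. It remains to exhibit a witness $B'$ with $B' \containedin (A \to B) \cap (A \to C)$, $A <: \dom{B'}$, $\cod{B'} <: B \cap C$, and $\neg\,\topInCod{B'}$. The obvious candidate is the whole left-hand side, $B' = (A \to B) \cap (A \to C)$: containment is trivial, $\dom{B'} = A \cap A$ so the first premise follows from reflexivity (Proposition~\ref{prop:subtyping}(\ref{prop:⊑-refl})) and (glb), and $\cod{B'} = B \cap C$ makes the second premise another instance of reflexivity. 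The sticking point — and essentially the only obstacle in the proof — is the side condition $\neg\,\topInCod{B'}$: the function parts of $(A \to B) \cap (A \to C)$ are exactly $A \to B$ and $A \to C$, so $\topInCod{B'}$ is equivalent to $\topP{B}$ or $\topP{C}$, which need not fail.

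To finish I would therefore refine the case $\neg\,\topP{B \cap C}$ according to $\topP{B}$ and $\topP{C}$. When both $\neg\,\topP{B}$ and $\neg\,\topP{C}$ hold, the witness $(A \to B) \cap (A \to C)$ works as above. When $\topP{B}$ holds, then necessarily $\neg\,\topP{C}$ (otherwise $\topP{B \cap C}$), and I would instead take the smaller witness $B' = A \to C$: here $\dom{B'} = A$ and $\cod{B'} = C$, the first premise is reflexivity, the second is $C <: B \cap C$, which follows by (glb) from $C <: C$ (reflexivity) and $C <: B$ (Proposition~\ref{prop:top}(\ref{prop:AllBot-⊑-any}), using $\topP{B}$), and $\neg\,\topInCod{A \to C}$ is exactly $\neg\,\topP{C}$. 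The remaining subcase $\topP{C}$, $\neg\,\topP{B}$ is symmetric with witness $B' = A \to B$, using $B <: C$ from Proposition~\ref{prop:top}(\ref{prop:AllBot-⊑-any}). In each branch the containment of the chosen witness in $(A \to B) \cap (A \to C)$ is immediate from the definitions of $\inside$ and $\containedin$, so the proof is complete.
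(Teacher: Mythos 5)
Your proof is correct and follows essentially the same route as the paper: the same case analysis on $\topP{B}$ and $\topP{C}$ (merely reorganized around $\topP{B\cap C}$ first), with the same witnesses $A \to C$, $A \to B$, and $(A\to B)\cap(A\to C)$, and the same appeals to reflexivity, (glb), and Proposition~\ref{prop:top}(\ref{prop:AllBot-⊑-any}).
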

\begin{proof}
   We consider the cases for whether $\topP{B}$ or $\topP{C}$.
   \begin{description}
   \item[Case $\topP{B}$ and $\topP{C}$]
     Then $\topP{B \cap C}$ and we conclude that
     $(A \to B) \cap (A \to C) <: A \to (B \cap C)$
     by rule $(\TOP{\to}')$.
   \item[Case $\topP{B}$ and $\neg\,\topP{C}$]
     We conclude that 
     $(A \to B) \cap (A \to C) <: A \to (B \cap C)$
     by rule $(\to')$, choosing the witness $A \to C$
     and noting that $C <: B$ 
     by way of Proposition~\ref{prop:top}
     part \ref{prop:AllBot-⊑-any}
     and $C <: C$ by Proposition~\ref{prop:subtyping}
     part \ref{prop:⊑-refl}.
   \item[Case $\neg\,\topP{B}$ and $\topP{C}$]
     We conclude that 
     $(A \to B) \cap (A \to C) <: A \to (B \cap C)$
     by rule $(\to')$, this time with witness $A \to B$
     and noting that
     $B <: B$ by Proposition~\ref{prop:subtyping}
     part \ref{prop:⊑-refl}
     and $B <: C$ 
     by way of Proposition~\ref{prop:top}
     part \ref{prop:AllBot-⊑-any}.
   \item[Case $\neg,\topP{B}$ and $\neg\,\topP{C}$]
     Again we apply rule $(\to')$, but with witness
     $(A \to B) \cap (A \to C)$.
   \end{description}
\end{proof}

\noindent We require one more lemma.

\begin{lemma}
  \label{lem:dv↦cv<:v}
  $A \leq \dom{A} \to \cod{A}$.
\end{lemma}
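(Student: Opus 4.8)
The plan is to prove this by induction on the structure of $A$. Since $\dom{A}$ and $\cod{A}$ are assumed to be defined, $A$ is either a function type $A_1 \to A_2$ or an intersection $A_1 \cap A_2$ in which both conjuncts again have a defined domain and codomain; there are no other cases to consider.

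If $A = A_1 \to A_2$, then $\dom{A} = A_1$ and $\cod{A} = A_2$, so the goal reduces to $A_1 \to A_2 \leq A_1 \to A_2$, which is an instance of (refl).

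If $A = A_1 \cap A_2$, the induction hypothesis gives $A_i \leq \dom{A_i} \to \cod{A_i}$ for $i \in \{1,2\}$. Since $\dom{A} = \dom{A_1} \cap \dom{A_2}$ and $\cod{A} = \cod{A_1} \cap \cod{A_2}$, we have $\dom{A} \leq \dom{A_i}$ by (incl$_L$) and (incl$_R$); hence rule $(\to)$, contravariant in the domain and using (refl) on the codomain, gives $\dom{A_i} \to \cod{A_i} \leq \dom{A} \to \cod{A_i}$. Composing the projection $A_1 \cap A_2 \leq A_i$ (by (incl$_L$)/(incl$_R$)) with the induction hypothesis, this last step, and (trans), we obtain $A_1 \cap A_2 \leq \dom{A} \to \cod{A_i}$ for each $i$. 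Then (glb) yields $A_1 \cap A_2 \leq (\dom{A} \to \cod{A_1}) \cap (\dom{A} \to \cod{A_2})$, and $({\to}{\cap})$ puts the right-hand side below $\dom{A} \to (\cod{A_1} \cap \cod{A_2})$, which is exactly $\dom{A} \to \cod{A}$. One more application of (trans) closes the case.

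The lemma is not deep; the one place that takes care is the intersection case, where the domains and codomains of the two conjuncts must be realigned so that the distributivity rule $({\to}{\cap})$ becomes applicable — that rule is precisely what lets the two pointwise bounds $A_1 \cap A_2 \leq \dom{A} \to \cod{A_i}$ be merged into a single function type, and it is the reason the lemma holds for BCD subtyping. This lemma is the last ingredient for the direction that $A <: B$ implies $A \leq B$: in the $(\to')$ case one has some $B'$ with $B' \containedin A$, and $C \leq \dom{B'}$ and $\cod{B'} \leq D$ from the induction hypothesis, after which $A \leq B' \leq \dom{B'} \to \cod{B'} \leq C \to D$ follows from this lemma, the routine fact that $B' \containedin A$ entails $A \leq B'$, rule $(\to)$, and (trans).
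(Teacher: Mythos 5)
Your proof is correct and follows the same route as the paper, which simply says the lemma is proved by induction on $A$; your intersection case (realigning the domains via (incl) and $(\to)$, then merging with (glb) and $({\to}{\cap})$, chained by (trans)) is the standard way to discharge that step.
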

\begin{proof}
  The proof is by induction on $A$.
\end{proof}

\noindent Now for the proof of equivalence

\begin{theorem}[Equivalence of the subtyping relations]\ \\
  $A <: B$ if and only if $A \leq B$.
\end{theorem}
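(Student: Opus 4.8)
The plan is to prove the two directions separately, handling the easier direction first. For the forward direction, $A <: B$ implies $A \leq B$, I would proceed by induction on the derivation of $A <: B$. Most cases translate directly into BCD derivations: (refl$_\alpha$) is an instance of (refl); (lb$_L$) and (lb$_R$) follow from (incl$_L$)/(incl$_R$) composed with the induction hypothesis via (trans); (glb) matches (glb); ($\TOP_{\mathrm{top}}$) matches ($\TOP_{\mathrm{top}}$). The interesting cases are $(\to')$ and $(\TOP{\to}')$. For $(\TOP{\to}')$, where $\topP{D}$, I would first establish (perhaps as a small auxiliary observation, if not already available) that $\topP{D}$ implies $\TOP \leq D$ in the BCD system, so that $A \leq \TOP \leq C \to \TOP \leq C \to D$ using ($\TOP_{\mathrm{top}}$), ($\TOP{\to}$), ($\to$), and (trans). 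For the $(\to')$ case, with witness $B$ satisfying $B \containedin A$, $C <: \dom{B}$, $\cod{B} <: D$, the induction hypotheses give $C \leq \dom{B}$ and $\cod{B} \leq D$ in BCD; I would use the fact that $B \containedin A$ yields $A \leq B$ in BCD (an easy induction, analogous to Proposition~\ref{prop:subtyping} part~\ref{prop:subset-sub-sub}), then $B \leq \dom{B} \to \cod{B}$ (the BCD analogue of Lemma~\ref{lem:dv↦cv<:v}, which is essentially $\beta$-soundness and provable from $({\to}{\cap})$ in BCD), then $\dom{B} \to \cod{B} \leq C \to D$ by $(\to)$, and chain everything with (trans).

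For the backward direction, $A \leq B$ implies $A <: B$, I would again induct on the derivation of $A \leq B$. The rules (refl), (incl$_L$), (incl$_R$), (glb), ($\TOP_{\mathrm{top}}$) are all handled by the corresponding basic properties already proved: (refl) by Proposition~\ref{prop:subtyping} part~\ref{prop:⊑-refl}; (incl$_L$)/(incl$_R$) by (lb$_L$)/(lb$_R$) applied to reflexivity; (glb) directly; ($\TOP_{\mathrm{top}}$) directly. The rule ($\to$) is exactly Lemma~\ref{lem:⊑-fun′}, and $({\to}{\cap})$ is exactly Lemma~\ref{lem:⊑-dist}. The rule ($\TOP{\to}$), $\TOP \leq C \to \TOP$, follows from $(\TOP{\to}')$ since $\topP{\TOP}$. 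The crucial case is (trans): from $A \leq B$ and $B \leq C$ the induction hypotheses give $A <: B$ and $B <: C$, and then Theorem~\ref{thm:⊑-trans} delivers $A <: C$. This is precisely why transitivity had to be established as a standalone theorem first.

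The main obstacle is really the forward direction's $(\to')$ case, because it requires reconstructing, inside the BCD system, the derived facts that the new system builds in as primitive: namely that containment implies subtyping and that $B \leq \dom{B} \to \cod{B}$. The latter is the BCD form of $\beta$-soundness; its proof is by induction on $B$ and uses $({\to}{\cap})$ together with (trans) and (glb) in the intersection case — this is the one place where the $({\to}{\cap})$ rule of BCD is genuinely needed, mirroring the remark in the introduction about why $({\to}{\cap})$ cannot simply be dropped. Everything else is bookkeeping: careful attention to the $\TOP$-related side conditions $\neg\,\topP{D}$ and $\neg\,\topInCod{B}$ is unnecessary for this direction since BCD has no such conditions, and in the backward direction those side conditions only ever help (they let us pick the appropriate $(\to')$ versus $(\TOP{\to}')$ branch, exactly as in Lemmas~\ref{lem:⊑-fun′} and~\ref{lem:⊑-dist}). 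I would therefore state and prove the two small BCD lemmas ($B \containedin A \Rightarrow A \leq B$, and $A \leq \dom{A} \to \cod{A}$ — the latter already appearing as Lemma~\ref{lem:dv↦cv<:v}) before launching the main induction, keeping the theorem's proof itself short.
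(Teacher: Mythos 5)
Your proposal is correct and follows essentially the same route as the paper: induction on each derivation, with the forward $(\to')$ case handled by chaining $A \leq B$ (from containment), $B \leq \dom{B} \to \cod{B}$ (Lemma~\ref{lem:dv↦cv<:v}), and $(\to)$ via (trans), and the backward (trans) case discharged by Theorem~\ref{thm:⊑-trans}, with $(\to)$ and $({\to}{\cap})$ handled by Lemmas~\ref{lem:⊑-fun′} and~\ref{lem:⊑-dist}. The only difference is cosmetic: you would factor the containment-implies-$\leq$ fact and $\topP{D} \Rightarrow \TOP \leq D$ into explicit auxiliary observations, which the paper uses inline without separate statements.
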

\begin{proof}
  We prove each direction of the if-and-only-if separately.
  \begin{description}
  \item[$A <: B$ implies $A \leq B$]
    We proceed by induction on the derivation of $A <: B$.
    \begin{description}
    \item[Case (refl$_\alpha$)] We conclude $\alpha \leq \alpha$ by (refl).      
    \item[Case (lb$_L$)] By the induction hypothesis we have $A \leq C$.
      By (incl$_L$) we have $A \cap B \leq A$. We conclude
      that $A \cap B \leq C$ by (trans).
    \item[Case (lb$_R$)] By the induction hypothesis we have $B \leq C$.
      By (incl$_R$) we have $A \cap B \leq B$. We conclude
      that $A \cap B \leq C$ by (trans).
    \item[Case (glb)] By the induction hypothesis we have
      $A \leq C$ and $A \leq D$, so we conclude that
      $A \leq C \cap D$ by (glb).
    \item[Case $(\to')$] By the induction hypothesis we have
      $C \leq \dom{B}$ and also $\cod{B} \leq D$.
      From $B \containedin A$ we have $A \leq B$.      
      Then by Lemma~\ref{lem:dv↦cv<:v} we have
      $B \leq \dom{B} \to \cod{B}$.
      Also, we have $\dom{B} \to \cod{B} \leq C \to D$ by rule $(\to)$.
      We conclude that $A \leq C \to D$ by chaining the three prior
      facts using (trans).
    \item[Case $(\TOP_{\mathrm{top}})$]
      We conclude that $A \leq \TOP$ by $(\TOP_{\mathrm{top}})$.
    \item[Case $(\TOP{\to}')$] We have $A \leq \TOP$ and $\TOP <: C \to
      \TOP$.  Also, $C \to \TOP \leq C \to D$ because $\TOP <: D$ follows
      from $\topP{D}$.  Thus, applying (trans) we conclude $A \leq C \to
      D$.
    \end{description}
    
  \item[$A \leq B$ implies $A <: B$]
    We proceed by induction on the derivation of $A \leq B$.
    \begin{description}
    \item[Case (refl)] We conclude $A <: A$ by Prop.~\ref{prop:subtyping}
      part \ref{prop:⊑-refl}.
    \item[Case (trans)]
      By the induction hypothesis, we have $A <: B$ and $B <: C$.
      We conclude that $A <: C$ by Theorem~\ref{thm:⊑-trans}.
    \item[Case (incl$_L$)] We have $A <: A$
      (Prop.~\ref{prop:subtyping} part \ref{prop:⊑-refl}),
      and therefore $A \cap B <: A$ by rule (lb$_L$).
    \item[Case (incl$_R$)] We have $B <: B$
      (Prop.~\ref{prop:subtyping} part \ref{prop:⊑-refl}),
      and therefore $A \cap B <: B$ by rule (lb$_R$).
    \item[Case (glb)] By the induction hypothesis, we have
      $A <: C$ and $A <: D$, so we conclude that $A <: C \cap D$ by (glb).
    \item[Case $(\to)$] By the induction hypothesis, we have $C <: A$
      and $B <: D$. We conclude that $A \to B <: C \to D$
      by Lemma~\ref{lem:⊑-fun′}.
    \item[Case $({\to}{\cap})$] We conclude that
      $(A \to B) \cap (A \to C) <: A \to (B \cap C)$
      by Lemma~\ref{lem:⊑-dist}.
    \item[Case $(\TOP_{\mathrm{top}})$]
      We conclude that $A <: \TOP$ by rule $(\TOP_{\mathrm{top}})$.
    \item[Case $(\TOP{\to})$]
      We have $\topP{\TOP}$, so $\TOP <: C \to \TOP$ by rule $(\TOP{\to}')$.
    \end{description}
    
  \end{description}
\end{proof}

\section{Conclusion}
\label{sec:conclude}

In this article we present a new subtyping relation $A <: B$ for
intersection types that satisfies a property similar to the subformula
property.  None of the rules of the new subtyping relation are
particularly novel, but the fact that we can prove transitivity
directly from is surprising!  We prove that the new relation is
equivalent to the subtyping relation $A \leq B$ of Barendregt, Coppo,
and Dezani-Ciancaglini.

\section*{Acknowledgments}

This material is based upon work supported by the National Science
Foundation under Grant No. 1814460.

\pagebreak

\bibliographystyle{plainnat}
\bibliography{all}

\end{document}